\documentclass[twocolumn,journal]{IEEEtran}

\usepackage{amsfonts}
\usepackage{amsmath}
\usepackage{amsthm}
\usepackage{amssymb}
\usepackage{graphicx}
\usepackage[T1]{fontenc}
\usepackage{supertabular}
\usepackage{longtable}
\usepackage[usenames,dvipsnames]{color}
\usepackage{bbm}
\usepackage{fancyhdr}
\usepackage{breqn}

\usepackage{capt-of}
\setcounter{MaxMatrixCols}{10}
\usepackage{tikz}
\usetikzlibrary{matrix}
\usepackage{endnotes}
\usepackage{soul}
\usepackage{marginnote}
\newtheorem{example}{Example}

\newtheorem{commentary}{Comment}
\newcommand{\mathsym}[1]{}
\newcommand{\unicode}[1]{}


\usepackage{colortbl}

\usepackage[framemethod=TikZ]{mdframed}
\usepackage[framemethod=TikZ]{mdframed}
\usepackage{framed}


\mdfsetup{%
	outerlinewidth=1,skipabove=20pt,backgroundcolor=yellow!30, outerlinecolor=black,innertopmargin=0pt,splittopskip=\topskip,skipbelow=\baselineskip, skipabove=\baselineskip,ntheorem,roundcorner=5pt}

\mdtheorem[nobreak=true,outerlinewidth=1,
backgroundcolor=yellow!50, outerlinecolor=black,innertopmargin=0pt,splittopskip=\topskip,skipbelow=\baselineskip, skipabove=\baselineskip,ntheorem,roundcorner=5pt,font=\itshape]{result}{Result}

\mdtheorem[nobreak=true,outerlinewidth=1,
backgroundcolor=yellow!30, outerlinecolor=black,innertopmargin=0pt,splittopskip=\topskip,skipbelow=\baselineskip, skipabove=\baselineskip,ntheorem,roundcorner=5pt,font=\itshape]{theorem}{Theorem}

\mdtheorem[nobreak=true,outerlinewidth=1,
backgroundcolor=gray!10, outerlinecolor=black,innertopmargin=0pt,splittopskip=\topskip,skipbelow=\baselineskip, skipabove=\baselineskip,ntheorem,roundcorner=5pt,font=\itshape]{remark}{Remark}

\mdtheorem[nobreak=true,outerlinewidth=1,
backgroundcolor=gray!10, outerlinecolor=gray!10,innertopmargin=0pt,splittopskip=\topskip,skipbelow=\baselineskip, skipabove=\baselineskip,ntheorem,roundcorner=5pt,font=\itshape]{definition}{Definition}

\mdtheorem[nobreak=true,outerlinewidth=1,
backgroundcolor=pink!30, outerlinecolor=black,innertopmargin=0pt,splittopskip=\topskip,skipbelow=\baselineskip, skipabove=\baselineskip,ntheorem,roundcorner=5pt,font=\itshape]{quaestio}{Quaestio}

\mdtheorem[nobreak=true,outerlinewidth=1,
backgroundcolor=yellow!50, outerlinecolor=black,innertopmargin=5pt,splittopskip=\topskip,skipbelow=\baselineskip, skipabove=\baselineskip,ntheorem,roundcorner=5pt,font=\itshape]{background}{Background}

\mdtheorem[nobreak=true,outerlinewidth=1,
backgroundcolor=gray!10, outerlinecolor=black,innertopmargin=5pt,splittopskip=\topskip,skipbelow=\baselineskip, skipabove=\baselineskip,ntheorem,roundcorner=5pt,font=\itshape]{nothing}{}

\mdtheorem[nobreak=true,outerlinewidth=1,
backgroundcolor=pink!50, outerlinecolor=black,innertopmargin=5pt,splittopskip=\topskip,skipbelow=\baselineskip, skipabove=\baselineskip,ntheorem,roundcorner=5pt,font=\itshape]{point}{Point}
\mdtheorem[nobreak=true,outerlinewidth=1,
backgroundcolor=pink!50, outerlinecolor=black,innertopmargin=5pt,splittopskip=\topskip,skipbelow=\baselineskip, skipabove=\baselineskip,ntheorem,roundcorner=5pt,font=\itshape]{lemma}{Lemma}

\mdtheorem[nobreak=true,outerlinewidth=1,
backgroundcolor=pink!50, outerlinecolor=black,innertopmargin=5pt,splittopskip=\topskip,skipbelow=\baselineskip, skipabove=\baselineskip,ntheorem,roundcorner=5pt,font=\itshape]{commentary}{Comment}
%

\begin{document}

\title{{\color{Red}
On the Statistical Differences between Binary Forecasts and Real World Payoffs}}

\author{Nassim Nicholas Taleb\IEEEauthorrefmark{1}\IEEEauthorrefmark{2}\\     
   \IEEEauthorblockA{  \IEEEauthorrefmark{1} NYU Tandon School of Engineering  \IEEEauthorrefmark{2} Universa Investments  \\
   Forthcoming, \textit{International Journal of Forecasting, 2020}}

   \thanks{\color{Brown}December 4, 2019. A version of this article was presented at the M4 competition conference in New York in October 2018. The author thanks Marcos Carreira, Peter Carr, Pasquale Cirillo, Zhuo Xi, two invaluable and anonymous referees, and, of course, the very, very patient Spyros Makridakis. A particular thanks to Raphael Douady who took the thankless task of reviewing the mathematical definitions and derivations.    }}

\maketitle



\flushbottom 

%

\section{Introduction/Abstract}
There can be considerable mathematical and statistical differences between the following two items:
\begin{enumerate}
 \item	(univariate) binary predictions, bets and "beliefs" (expressed as a specific "event" will happen/will not happen) and, on the other,
 \item real-world continuous payoffs (that is, numerical benefits or harm from an event).
 \end{enumerate}
 Way too often, the decision science and economics literature uses one as a proxy for another. Some results, say overestimation of tail \textit{probability}, by humans can be stated in one result\footnote{The notion of "calibration" is presented and explained in Fig. \ref{calibration} and its captions; for its origin, see Lichtenstein et al (\cite{lichtenstein1977calibration}, and more explicitly \cite{lichtenstein1978judged}  "The primary bias is the "overestimation of low frequencies and underestimation of ...high frequencies") and  most notably since the influential Kahneman and Tversky paper on prospect theory \cite{kahneman1979prospect}). Miscalibration and the point of the article are shown in Fig. \ref{miscalibration}.} and unwarranted conclusions that people overestimate \textit{tail risk} have been chronically made since.\footnote{See further \cite{johnson1983affect} , A more recent instance of such conflation is in a review paper by Barberis \cite{barberis2013psychology}).}\footnote{Some alternative mechanisms in say \cite{hertwig2005judgments} and more generally Gigerenzer's group, see a summary in \cite{gigerenzer2004fast}.}
 
  In this paper we show the mischaracterization as made in the decision-science literature and presents the effect of their conflation.  We also examine the differences under thin and fat tails --for under Gaussian distributions the effect can be marginal, which may have lulled the psychology literature into the conflation. 
 
 The net effects are:
 
\subsubsection{\textbf {Spuriousness of many psychological results}}
This affects risk management claims, particularly the research results to the effect that humans overestimate the risks of rare events.  Many perceived "biases" are shown to be just mischaracterizations by psychologists. We quantify such conflations with a metric for "pseudo-overestimation".

\subsubsection{\textbf {Being a "good forecaster" in binary space doesn't lead to having a good actual performance} }
The reverse is also true, and the effect  is exacerbated under nonlinearities. A binary forecasting record is likely to be a reverse indicator under some classes of distributions or deeper uncertainty. 

\subsubsection{ \textbf {Machine Learning}} Some nonlinear payoff functions, while not lending themselves to verbalistic expressions and "forecasts", can be well captured by ML or expressed in option contracts.

\subsubsection{\textbf { Fattailedness}} The difference is exaggerated  when the variable under consideration lies in the power law classes of probability distributions.

\subsubsection{\textbf {  Model error}} Binary forecasts are not particularly prone to model error; real world payoffs are.

\bigskip
The paper is organized as follows. We first present the difference in statistical properties thanks to precise mathematical definitions of the two types in section \ref{contdiscrete}. The text is structured with (numbered) "definitions","comments", and "examples".   Section \ref{nocollapse} presents the differences in the context of Gaussian-like and fat tailed environments (that is, the class of distributions dominated by remove events), a separation based on the presence or absence of a characteristic scale.  Section \ref{spurious} develops the mathematics of spurious overestimation, comparing the properties of payoffs under thin tails (section A) and Fat Tails (section B), discusses the conflation and presents the impact of model error (section D). Section \ref{miscalibr} applies to the calibration in psychological experiments. Section \ref{metrics} presents a catalogue of scoring metrics. Section \ref{ml} shows the loss functions of machine learning and how they fit nonlinear payoffs in practical applications.

The appendix shows the mathematical derivations and exact distribution of the various payoffs, along with an exact explicit functions for the Brier score helpful for other applications such as significance testing and sample sufficiency (new to the literature). 

\begin{figure}[h!]
	\includegraphics[width=\linewidth]{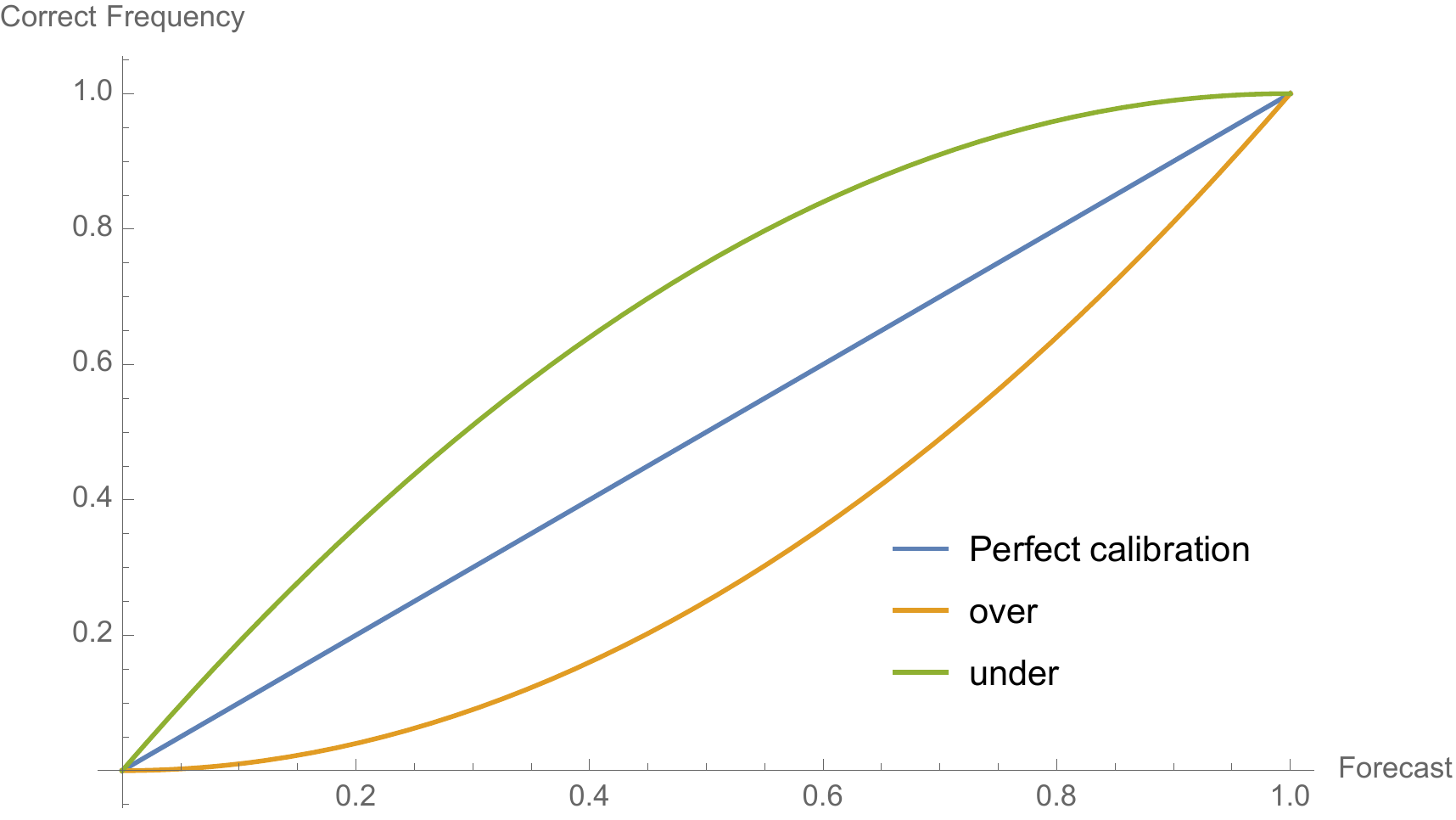}
	\caption{Probabilistic calibration (simplified) as seen in the psychology literature. The $x$ axis shows the estimated probability produced by the forecaster, the $y$ axis the actual realizations, so if a weather forecaster predicts $30\%$ chance of rain, and rain occurs $30\%$ of the time, they are deemed "calibrated". We hold that calibration in frequency (probability) space is an academic exercise (in the bad sense of the word) that mistracks real life outcomes outside narrow binary bets. It is particularly fallacious under fat tails.}\label{calibration}
\bigskip
	\includegraphics[width=\linewidth]{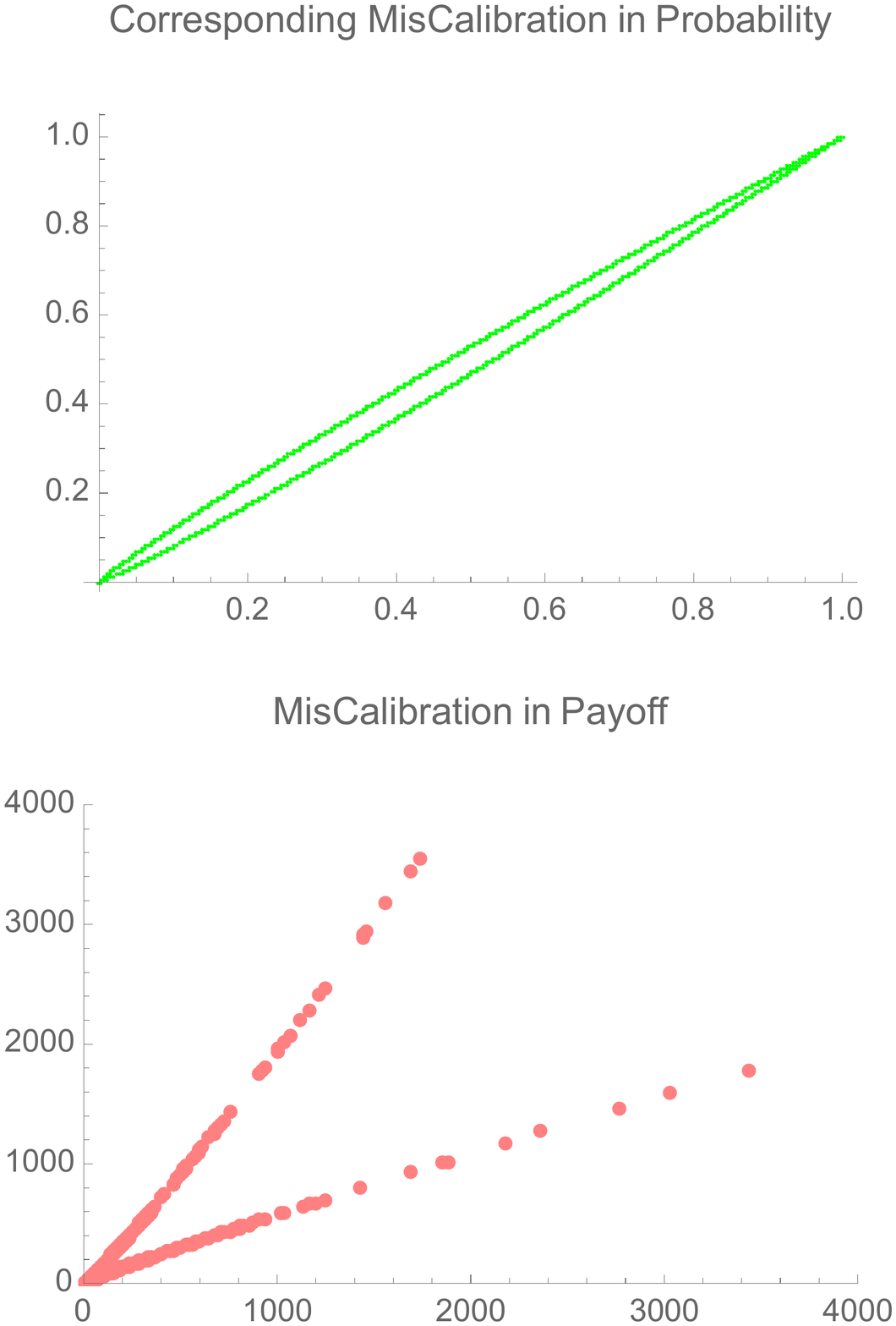}
	\caption{How miscalibration in probability corresponds to miscalibration in payoff under power laws. The distribution under consideration is Pareto with tail index $\alpha=1.15$}\label{miscalibration}
\end{figure}

\begin{figure}[h!]
	\includegraphics[width=\linewidth]{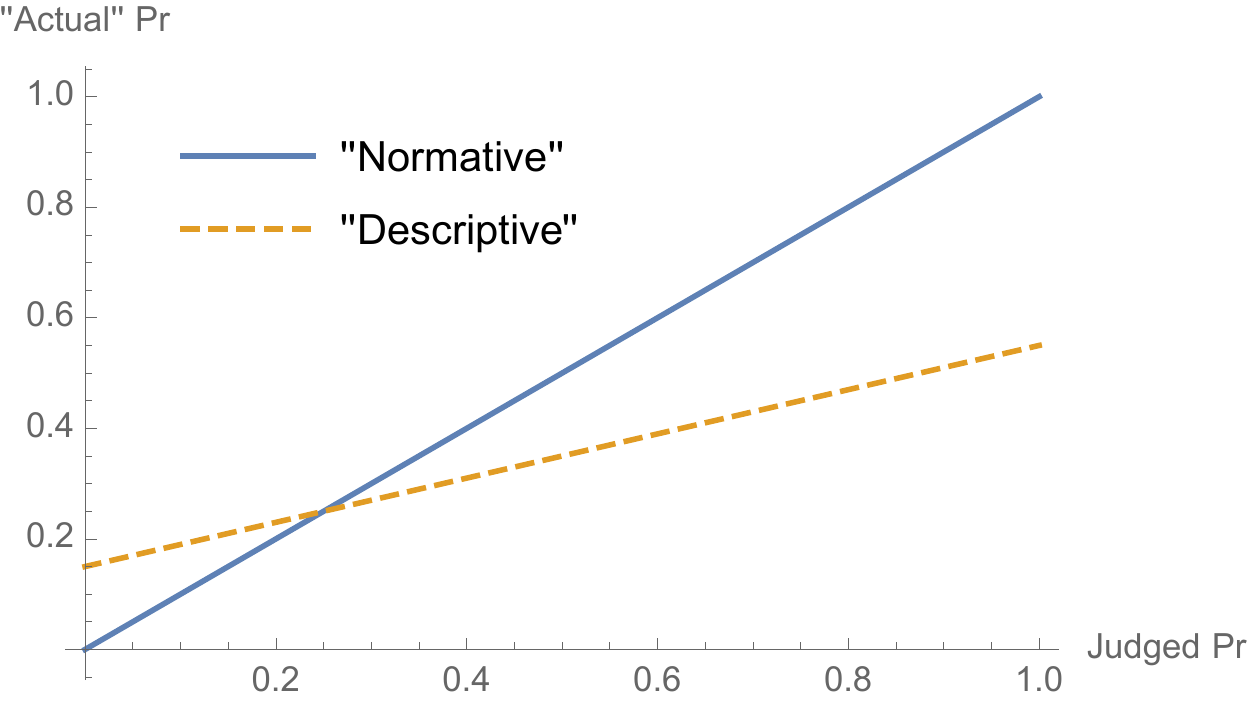}\label{baron}
	\caption{"Typical patterns" as stated and described in Baron's textbook \cite{baron2008thinking}, a representative claim in psychology of decision making that people overestimate small probability events. 
	We note that to the left, in the estimation part, 1) events such as floods, tornados, botulism, mostly patently fat tailed variables, matters of severe consequences that agents might have incorporated in the probability, 2) these probabilities are subjected to estimation error that, when endogenized, increase the true probability of remote events.}
	\end{figure}

\section{Continuous vs. Discrete Payoffs: Definitions and Comments}\label{contdiscrete}
\begin{example}["One does not eat beliefs and (binary) forecasts"]
In the first volume of the \textit{Incerto} (\textit{ Fooled by Randomness}, 2001 \cite{taleb2016incerto}), the narrator, a trader, is asked by the manager "do you predict that the market is going up or down?" "Up", he answered, with confidence. Then the boss got angry when, looking at the firm's exposures, he discovered that the narrator was short the market, i.e., would benefit from the market going down. 

The trader had difficulties conveying the idea that there was no contradiction, as someone could hold the (binary) belief that the market had a higher probability of going up than down, but that, should it go down, there is a very small probability that it could go down considerably, hence a short position had a positive expected return and the rational response was to engage in a short exposure. "You do not eat forecasts, but P/L"  (or "one does not monetize forecasts") goes the saying among traders.
\end{example}
If exposures and beliefs do not go in the same direction, it is because beliefs are verbalistic reductions that contract a higher dimensional object into a single dimension. To express the manager's error in terms of decision-making research, there can be a conflation in something as elementary as the notion of a binary event (related to the zeroth moment) or the \textit{probability} of an event and \textit{expected payoff} from it (related to the first moment and, when nonlinear, to all higher moments) as the payoff functions of the two can be similar in some circumstances and different in others. 

\begin{commentary}
	In short, probabilistic calibration requires estimations of the zeroth moment while the real world requires all moments (outside of gambling bets or artificial environments such as psychological experiments where payoffs are necessarily truncated), and it is a central property of fat tails that higher moments are explosive (even "infinite") and count more and more.
\end{commentary}



\subsection{Away from the Verbalistic}
While the trader story is mathematically trivial (though the mistake is committed a bit  too often), more serious gaps are present in decision making and risk management, particularly when the payoff function is more complicated, or nonlinear (and related to higher moments). So once we map the contracts or exposures mathematically, rather than focus on words and verbal descriptions, some serious distributional issues arise.
\begin{definition}[ Event]
	A (real-valued) random variable $X\colon \Omega \to \mathbb{R}$ defined on the probability space $(\Omega, \mathcal{F}, P)$ is a function $X(\omega)$ of the outcome $\omega \in \Omega$. An event is a measurable subset (countable or not) of $\Omega$, measurable meaning that it can be defined through the value(s) of one of several random variable(s). 
	\end{definition}

\begin{definition}[Binary forecast/payoff]
A
binary forecast (belief, or payoff) is a random variable  taking two  values $$X: \Omega\rightarrow \{X_1,X_2\},$$ with realizations $X_1,X_2\in \mathbb{R}$.

\end{definition}
In other words, it lives in the binary set (say $\{0,1\}$, $\{-1,1\}$, etc.), i.e., the specified event will or will not take place and, if there is a payoff, such payoff will be mapped into two finite numbers (a fixed sum if the event happened, another one if it didn't). Unless otherwise specified, in this discussion we default to the $\{0,1\}$ set.

Example of situations in the real world where the payoff is binary:
        \begin{itemize}
\item Casino gambling, lotteries	, coin flips, "ludic" environments, or binary options paying a fixed sum if, say, the stock market falls below a certain point and nothing otherwise --deemed a form of gambling\footnote{Retail binary options are typically used for gambling and have been banned in many jurisdictions, such as, for instance, by the European Securities and Markets Authority (ESMA), www.esma.europa.eu, as well as the United States where it is considered another form of internet gambling, triggering a complaint by a collection of decision scientists, see Arrow et al. \cite{arrow2008promise}. We consider such banning as justified since bets have practically no economic value, compared to financial markets that are widely open to the public, where natural exposures can be properly offset.}.
\item Elections where the outcome is binary (e.g., referenda, U.S. Presidential Elections), though not the economic effect of the result of the election.\footnote{Note the absence of spontaneously forming gambling markets with binary payoffs for continuous variables. The exception might have been binary options but these did not remain in fashion for very long, from the experiences of the author, for a period between 1993 and 1998, largely motivated by tax gimmicks.}
\item Medical prognoses for a single patient entailing survival or cure over a specified duration, though not the duration itself as variable, or disease-specific survival expressed in time, or conditional life expectancy. Also exclude anything related to epidemiology.
\item Whether a given person who has an online profile will buy or not a unit or more of a specific product at a given time (not the quantity or units).
        \end{itemize}
\begin{commentary}[A binary belief is equivalent to a payoff]
A binary "belief" should map to an economic payoff (under some scaling or normalization necessarily to constitute a probability), an insight owed to De Finetti \cite{de1972probability}  who held that a "belief" and a "prediction" (when they are concerned with two distinct outcomes) map into the equivalent of the expectation of a binary random variable and bets with a payoff in $\{0,1\}$. An "opinion" becomes a choice price for a gamble, and one at which one is equally willing to buy or sell. Inconsistent opinions therefore would lead to a violation of arbitrage rules, such as the "Dutch book", where a combination of mispriced bets can guarantee a future loss.
\end{commentary}

\begin{definition}[Real world open continuous payoff]
$$X:\Omega \to [a, \infty) \lor (-\infty,b] \lor (-\infty, \infty) $$
A continuous payoff "lives" in an interval, not a finite set. It  corresponds to an unbounded random variable  either doubly unbounded or semi-bounded, with the bound on one side (one tailed variable).
\end{definition}
\subsubsection*{Caveat} We are limiting for the purposes of our study the consideration to binary vs. continuous and open-ended (i.e., no compact support). Many discrete payoffs are subsumed into the continuous class using standard arguments of approximation. We are also omitting triplets, that is, payoffs in, say $\{-1,0,3\}$, as these obey the properties of binaries (and can be constructed using a sum of binaries). Further, many variable with a floor and a remote ceiling (hence, formally with compact support), such as the number of victims or a catastrophe, are analytically and practically treated as if they were open-ended \cite{cirillo2016statistical}.

Example of situations in the real world where the payoff is continuous:
        \begin{itemize}
\item Wars casualties, calamities due to earthquake, medical bills, etc.
\item Magnitude of a market crash, severity of a recession, rate of inflation
\item Income from a strategy
\item Sales and profitability of a new product \item In general, anything covered by an insurance contract
         \end{itemize}

\begin{figure}[h!]
\includegraphics[width=\columnwidth]{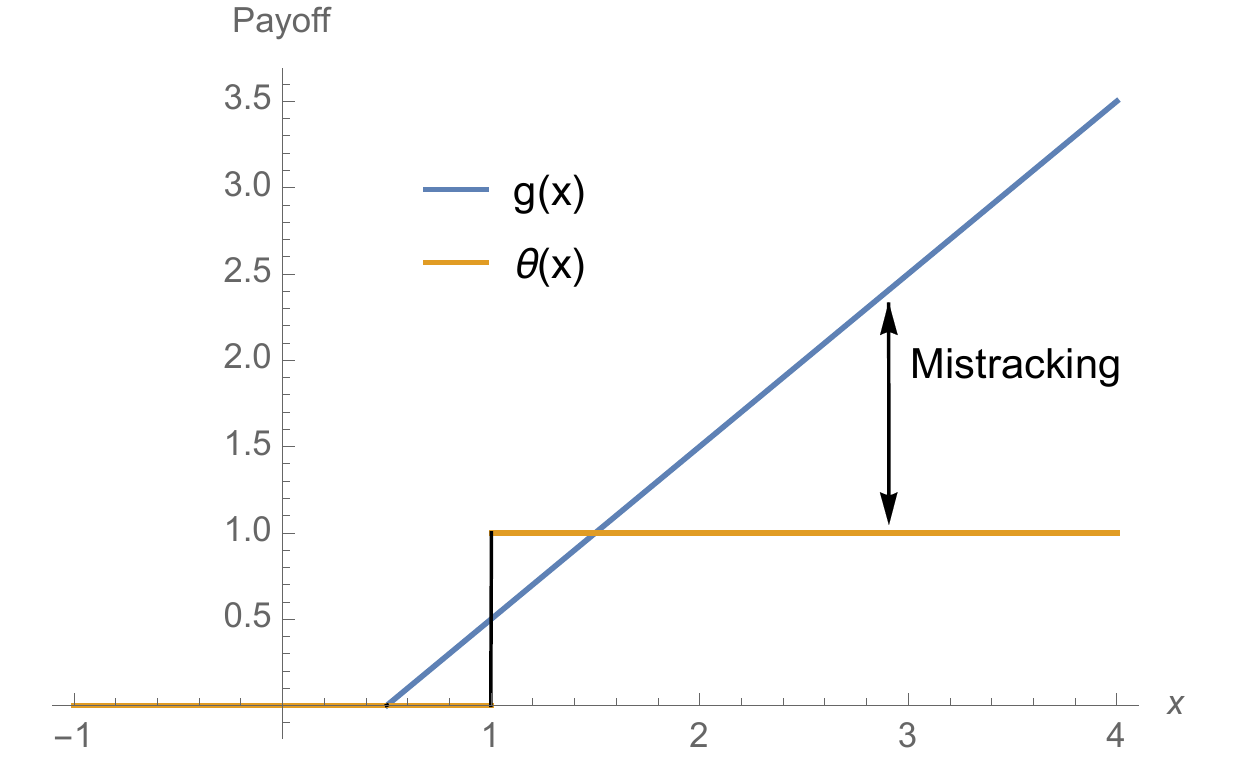}
\caption{ Comparing the payoff of a binary bet (The Heaviside $\theta(.)$) to a continuous open-ended exposure $g(x)$. Visibly there is no way to match the (mathematical) derivatives for any form of hedging. }\label{mistrackingbinary}
\end{figure}

Most natural and socio-economic variables are continuous and their statistical distribution does not have a compact support in the sense that we do not have a handle of an exact upper bound.

\begin{example}
Predictive analytics in binary space $\{0,1\}$ can be successful in forecasting if, from his online activity, online consumer Iannis Papadopoulos will purchase a certain item, say a wedding ring, based solely on computation of the probability. But the probability of the "success" for a potential new product might be --as with the trader's story-- misleading. Given that company sales are typically fat tailed, a very low probability of success might still be satisfactory to make a decision. Consider venture capital or option trading --an out of the money option can often be attractive yet may have less than 1 in 1000 probability of ever paying off.

More significantly, the tracking error for probability guesses will not map to that of the performance. 
\end{example}

This difference is well known by option traders as there are financial derivative contracts called "binaries" that pay in the binary set $\{0,1\}$ (say if the underlying asset $S$, say, exceeds a strike price $K$), while others called "vanilla" that pay in $[0,\infty)$, i.e. $\max(S-K,0)$ (or, worse, in $(-\infty, 0)$ for the seller can now be exposed to bankruptcy owing to the unbounded exposure). The considerable mathematical and economic difference between the two has been discussed and is the subject of \textit{Dynamic Hedging: Managing Vanilla and Exotic Options} \cite{taleb1997dynamic}. Given that the former are bets paying a fixed amount and the latter have full payoff, one cannot be properly replicated (or hedged) using another, especially under fat tails and parametric uncertainty --meaning performance in one does not translate to performance into the other.  While this knowledge is well known in mathematical finance it doesn't seem to have been passed on to the decision-theory literature.
\begin{commentary}[Derivatives theory]
Our approach here is inspired from derivatives (or option) theory and practice where there are different types of derivative contracts, 1) those with binary payoffs (that pay a fixed sum if an event happens) and 2) "vanilla" ones (standard options with continuous payoffs). It is practically impossible to hedge one with another \cite{taleb1997dynamic}. Furthermore a binary  bet with a strike price $K$ and a call option with same strike $K$, with $K$ in the tails of the distribution, almost always have their valuations react in opposite ways when one increases the kurtosis of the distribution, (while preserving the first three moments) or, in an example further down in the lognormal environment, when one increases uncertainty via the scale of the distribution.
\end{commentary}
\begin{commentary}[Term sheets]
Note that, thanks to "term sheets" that are necessary both legally and mathematically, financial derivatives practice provides precise legalistic mapping of payoffs in a way to make their mathematical, statistical, and economic differences salient.
\end{commentary}

There has been a  tension between prediction markets and real financial markets. As we can show here, prediction markets may be useful for gamblers, but they cannot hedge economic exposures.

The mathematics of the difference and the impossibility of hedging can be shown in the following. Let $X$ be a random variable in $\mathbbm{R}$, we have the payoff of the bet or the prediction $\theta_K:\mathbbm{R}\rightarrow \{0,1\}$,
\begin{equation}
\theta_K(x)=\left\{\begin{array}{cc}
 	1, &  x\geq K \\
 	 0 & \text{otherwise},\\
 \end{array}	 \right.	
\end{equation}
 and $g: \mathbbm{R} \rightarrow \mathbbm{R}$ that of the natural exposure. Since $\frac{\partial }{\partial x}\theta_K(x)$ is a Dirac delta function at $K$, $\delta(K)$  and $\frac{\partial)}{\partial x} g_k(x)$ is at least once differentiable for $x\geq K$ (or constant in case the exposure is globally linear or, like an option, piecewise linear above $K$), matching derivatives for the purposes of offsetting variations is not a possible strategy.\footnote{To replicate an open-ended continuous payoff with binaries, one  needs an infinite series of bets, which cancels the entire idea of a prediction market by transforming it into a financial market. Distributions with compact support always have finite moments, not the case of those on the real line.} The point is illustrated in Fig \ref{mistrackingbinary}.

\section{ There is no defined "collapse",  "disaster", or "success" under fat tails} \label{nocollapse}

The fact that an "event" has some uncertainty around its magnitude carries some mathematical consequences. Some verbalistic papers  still commit in 2019 the fallacy of binarizing an event in $[0,\infty)$: A recent paper on calibration of beliefs, \cite{dana2019markets} says "...if a person claims that the United States is on the verge of an economic collapse or that a climate disaster is imminent..." An economic "collapse" or a climate "disaster" must not be expressed as an event in $\{0,1\}$ when in the real world it can take many values. For that, a characteristic scale is required. In fact under fat tails, there is no "typical" collapse or disaster, owing to the absence of characteristic scale, hence verbal binary predictions or beliefs cannot be used as gauges. 

We present the difference between thin tailed and fat tailed domains (illustrated for the intuition of the differences in Fig \ref{threeplots}) as follows.
 \begin{definition}[Characteristic scale]
 Let $X$ be a random variable that lives in either $(0,\infty)$ or $(-\infty, \infty)$ and $\mathbbm{E}$ the expectation operator under "real world" (physical) distribution. By classical results \cite{embrechts1997modelling}:
 \begin{equation}
 	\lim_{K \to \infty} \frac{1}{K} \mathbbm{E}(X|_{X>K})= \lambda,\label{chscale}
 \end{equation}
 \begin{itemize}
\item If 	$\lambda=1$ , $X$ is said to be in the thin tailed class $\mathcal{D}_1$ and has a characteristic scale
\item If 	$\lambda>1$ , $X$ is said to be in the fat tailed regular variation class $\mathcal{D}_2$ and has no characteristic scale
\item If   $$\lim_{K \to \infty}  \mathbbm{E}(X|_{X>K})-K= \mu$$ where $\mu >0$,  then $X$ is in the borderline exponential class
 \end{itemize}

\end{definition}

 The point can be made clear as follows. One cannot have a binary contract that adequately hedges someone against a "collapse", given that one cannot know in advance the size of the collapse or how much the face value or such contract needs to be. On the other hand, an insurance contract or option with continuous payoff would provide a satisfactory hedge. Another way to view it: reducing these events to verbalistic "collapse", "disaster" is equivalent to a health insurance payout of a lump sum if one is "very ill" --regardless of the nature and gravity of the illness -- and $0$ otherwise.

And it is highly flawed to separate payoff and probability in the integral of expected payoff.\footnote{ Practically all economic and informational variables have been shown since the 1960s to belong to the $\mathcal{D}_2$ class, or at least the intermediate subexponential class (which includes the lognormal), \cite{mandelbrot1960pareto,mandelbrot1963stable,mandelbrot1997new, gabaix2008power,taleb2016incerto}, along with social variables such as size of cities, words in languages, connections in networks, size of firms, incomes for firms, macroeconomic data, monetary data, victims from interstate conflicts and civil wars\cite{richardson1941frequency, cirillo2016statistical}, operational risk, damage from earthquakes, tsunamis, hurricanes and other natural calamities, income inequality \cite{champernowne1953model}, etc. Which leaves us with the more rational question: where are Gaussian variables? These appear to be at best one order of magnitude fewer in decisions entailing formal predictions.} Some experiments of the type shown in Fig. \ref{baron} ask agents what is their estimates of deaths from botulism or some such disease: agents are blamed for misunderstanding the probability. This is rather a problem with the experiment: people do not necessarily separate probabilities from payoffs.

\section{Spurious overestimation of tail probability in the psychology literature}\label{spurious}

	\begin{definition}[Substitution of integral]
	  Let $K \in \mathbbm{R}^+$ be a threshold, $f(.)$ a density function and $p_K \in [0,1]$ the probability of exceeding it, and $g(x)$ an impact function. Let $I_1$ be the expected payoff above $K$:
$$I_1=\int_K^{\infty } g(x) f(x) \, \mathrm{d}x,$$ 
and Let $I_2$ be the impact at $K$ multiplied by the probability of exceeding $K$:
 $$ I_2=g(K) \int_K^{\infty } f (x) \, dx=g(K) p_K.$$ 
The substitution comes from conflating $I_1$ and $I_2$, which becomes an identity if and only if $g(.)$ is constant above $K$ (say $g(x)=\theta_K(x)$, the Heaviside theta function).  For $g(.)$ a variable function with positive first derivative, $I_1$ can be close to $I_2$ only under thin-tailed distributions, not under the fat tailed ones.

\end{definition}
For the discussions and  examples in this section assume $g(x) = x$ as we will consider the more advanced nonlinear case in Section \ref{ml}. 

\begin{theorem}[Convergence of $\frac{I_1}{I_2}$]
If $X$ is in the thin tailed class $\mathcal{D}_1$ as described in \ref{chscale},
\begin{equation}
\lim_{K\to \infty} \frac{I_1}{I_2}= 1		
\end{equation}
If $X$ is in the regular variation class $\mathcal{D}_2$,
\begin{equation}
\lim_{K\to \infty} \frac{I_1}{I_2}=\lambda > 1.		
\end{equation}

\end{theorem}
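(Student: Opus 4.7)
The plan is to recognize that the ratio $I_1/I_2$ is, up to a factor of $1/K$, exactly the conditional mean excess appearing in the characteristic-scale definition \eqref{chscale}, so the theorem reduces to a direct invocation of that definition rather than a fresh tail computation.

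First I would rewrite $I_1$ in terms of a conditional expectation by dividing and multiplying by $P(X>K)=\int_K^\infty f(x)\,\mathrm{d}x$, obtaining
\[
I_1=\int_K^{\infty} x f(x)\,\mathrm{d}x = \mathbbm{E}(X\mid X>K)\cdot P(X>K).
\]
Since $g(x)=x$, the denominator is simply $I_2=K\cdot P(X>K)$, so the probabilistic weight $P(X>K)$ cancels, leaving
\[
\frac{I_1}{I_2}=\frac{1}{K}\,\mathbbm{E}(X\mid X>K).
\]
I would emphasize here the (small) measurability hypothesis that $\mathbbm{E}|X|<\infty$ on the half-line $\{X>K\}$ so that the conditional expectation is well-defined for $K$ large; this holds in both $\mathcal{D}_1$ and the regular variation class $\mathcal{D}_2$ (where we additionally require tail index $\alpha>1$, otherwise $\lambda=\infty$ and the statement should be read as divergence).

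Next I would invoke definition \eqref{chscale} directly: by hypothesis, $\lim_{K\to\infty}\frac{1}{K}\mathbbm{E}(X\mid X>K)=\lambda$, so the two cases of the theorem are immediate. When $X\in\mathcal{D}_1$, $\lambda=1$ and hence $I_1/I_2\to 1$; when $X\in\mathcal{D}_2$, $\lambda>1$ and hence $I_1/I_2\to\lambda>1$. A brief remark would note the intuition: in the regular-variation (Pareto-type) case with tail index $\alpha>1$, the van der Wijk/mean-excess relation gives $\lambda=\alpha/(\alpha-1)$, which is strictly greater than one and diverges as $\alpha\downarrow 1$, whereas for Gaussian-like tails the mean-excess function grows sublinearly in $K$, forcing $\lambda=1$.

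There is essentially no obstacle: the content of the theorem is the algebraic identity $I_1/I_2=K^{-1}\mathbbm{E}(X\mid X>K)$. The only place where care is needed is ensuring that the regime of $\alpha$ (or more generally integrability of $X$ against $f$ on the tail) is specified so that the limit $\lambda$ exists and is finite; the borderline exponential class of the preceding definition is the one case where $\lambda=1$ despite fat-ish tails, and I would flag this as belonging with $\mathcal{D}_1$ for the purposes of this theorem.
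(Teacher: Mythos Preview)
Your proposal is correct and follows essentially the same route as the paper: the paper's proof is simply ``From Eq.~\ref{chscale}'', i.e., the observation that $I_1/I_2=\tfrac{1}{K}\mathbbm{E}(X\mid X>K)$ reduces the statement to the characteristic-scale definition, after which the paper records the Gaussian and Pareto special cases (with $\lambda=\alpha/(\alpha-1)$) as further comments. Your explicit treatment of the integrability hypothesis ($\alpha>1$) and the borderline exponential class is a welcome addition of rigor that the paper leaves implicit.
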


\begin{proof}
From Eq. \ref{chscale}.
Further comments:
\subsection{Thin tails}  By our very definition of a thin tailed distribution (more generally any distribution outside the subexponential class, indexed by $(g)$), where $f^{(g)}(.)$ is the PDF:
\begin{equation}
	\underset{K\to \infty }{\text{lim}}\frac{\int_{K}^{\infty } x f^{(g)} (x) \, dx}{K \int_{K}^{\infty } f^{(g)} (x) \, dx }=\frac{I_1}{I_2}=1.
\end{equation}
Special case of a Gaussian: Let $g(.)$ be the PDF of predominantly used Gaussian distribution (centered and normalized),
\begin{equation}
	\int_K^{\infty } x g (x) \, dx=\frac{e^{-\frac{K^2}{2}}}{\sqrt{2 \pi }}
\end{equation}
and $K_p=\frac{1}{2} \text{erfc}\left(\frac{K}{\sqrt{2}}\right)$, where erfc is the complementary error function, and $K_p$ is the threshold corresponding to the probability $p$. 

We note that $K_p \frac{I_1}{I_2}$ corresponds to the inverse Mills ratio used in insurance.
\subsection{Fat tails} For all distributions in the regular variation class, defined by their tail survival function: for $K$ large,  $$\mathbb{P} (X>K) \approx L K^{-\alpha },\;\alpha >1,$$
 where $L>0$ and $f^{(p)}$ is the PDF of a member of that class:
\begin{equation}
	\underset{K_p\to \infty }{\text{lim}}\frac{\int_K^{\infty } x f^{(p)} (x) \, dx}{K \int_{K_p}^{\infty } f^{(p)} (x) \, dx}=\frac{\alpha }{\alpha -1}>1
\end{equation}

\end{proof}
\subsection{Conflations}

\subsubsection{Conflation of $I_1$ and $I_2$} 
In numerous experiments, which include the prospect theory paper by Kahneman and Tversky (1978) \cite{kahneman1979prospect}, it has been repeatedly established that agents overestimate small probabilities in experiments where the odds are shown to them, and when the outcome corresponds to a single payoff. The well known Kahneman-Tversky result proved robust, but interpretations make erroneous claims from it. Practically all the subsequent literature, relies on $I_2$ and conflates it with $I_1$, what this author has called \textit{the ludic fallacy} in \textit{The Black Swan} \cite{taleb2016incerto}, as games are necessarily truncating a dimension from reality.  The psychological results might be robust, in the sense that they replicate when repeated in the exact similar conditions, but all the claims outside these conditions and extensions to real risks will be an exceedingly dubious generalization --given that our exposures in the real world  rarely map to $I_1$. Furthermore, one can overestimate the probability yet underestimate the expected payoff. 
\subsubsection{Stickiness of the conflation} The misinterpretation is still made four decades  after Kahneman-Tversky (1979). In a review of behavioral economics, with emphasis on miscaculation of probability, Barberis (2003) \cite{barberis2013psychology} treats $I_1=I_2$. And Arrow et al. \cite{arrow2008promise}, a long list of decision scientists pleading for deregulation of the betting markets also misrepresented the fitness of these binary forecasts to the real world (particularly in the presence of real financial markets). 

\subsubsection{The Value at Risk Problem} Another stringent --and  dangerous --example is the "default VaR" (Value at risk, i.e. the minimumm one is expected to lose within, say, 1\% probability, over a certain period, that is the quantile for 1\%, see \cite{artzner1999coherent}) which is explicitly given as $I_2$, i.e. default probability $x (1 - \text{expected recovery rate})$. This quantity can be quite different from the actual loss expectation \textit{in case} of default. The Conditional Value at Risk, CVar, measures the expected total losses conditional on default, which is altogether another entity, particularly under fat tails (and, of course, considerably higher). Further, financial regulators presents erroneous approximations of CVaR, and the approximation is the risk-management flaw that may have caused the crisis of 2008 \cite{taleb2012prevent}.\footnote{The mathematical expression of the Value at Risk, VaR, for a random variable $X$ with distribution function $F$ and threshold $\alpha \in [0,1]$ $$\operatorname{VaR}_\alpha(X)=-\inf\big\{x\in\mathbb{R}:F_X(x)>\alpha\big\} 
,$$ and the corresponding CVar
$$
 \operatorname {ES}_\alpha(X) = \mathbb{E}\left(-X\mid_{X \leq -\operatorname{VaR}_\alpha(X)}\right) 
$$}
The fallacious argument is that they compute the recovery rate as the expected value of collateral, without conditioning by the default event. The expected value of the collateral conditionally to a default is often far less then its unconditional expectation. In 2007, after a massive series of foreclosures, the value of most collaterals dropped to about 1/3 of the expected value!

Furthermore,  VaR is not a coherent risk measure, owing to its violating the subadditivity property \cite{delbaen1994general}: for random variables (or assets) $X$ and $Y$ and a threshold $\alpha$, a measure must meet the inequality $\mu_\alpha(X+Y)\leq \mu_\alpha(X)+\mu_\alpha(Y)$. On the other hand, CVaR satisfies the coherence bound; likewise $I_2$, because its construction is similar to VaR, violates the inequality.

\subsubsection{Misunderstanding of Hayek's knowledge arguments}  "Hayekian" arguments for the consolidation of beliefs via prices does not lead to prediction markets as discussed in such pieces as \cite{bragues2009prediction}, or Sunstein's \cite{sunstein2006deliberating}: prices exist in financial and commercial markets; prices are not binary bets. For Hayek \cite{hayek1945use} consolidation of knowledge is done via prices and \textit{arbitrageurs} (his words)--and arbitrageurs trade products, services, and financial securities, not binary bets.

\begin{definition}[Corrected probability in binarized experiments]
	Let $p^*$ be the equivalent probability to make $I_1=I_2$ and eliminate the effect of the error, so $$p^*=\{p: I_1=I_2 \}$$
\end{definition}
Now let' s solve for $K_p$ "in the tails", working with a probability $p$. For the Gaussian, $K_p=\sqrt{2} \text{erfc}^{-1}(2 p)$; for the Paretan tailed distribution, $K_p=p^{-1/\alpha }$.

Hence, for a Paretan distribution, the ratio of real continuous probability to the binary one $$\frac{p^*}{p}=\frac{\alpha}{\alpha-1}$$
which can allow in absurd cases $p^*$ to exceed 1 when the distribution is grossly misspecified.	

Tables \ref{Gaussianpseudooverestimation} and \ref{Paretanpseudooverestimation} show, for a probability level $p$, the corresponding tail level $K_p$, such as $$K_p=\left\{\text{inf } K: \mathbbm{P}(X>K)> p \right\},$$ and the corresponding adjusted probability $p^*$ that de-binarize the event \footnote{The analysis is invariant to whether we use the right or left tail .By convention, finance uses negative value for losses, whereas other areas of risk management express the negative of the random variance, hence focus on the right tail.}\footnote{$K_p$ is equivalent to the Value at Risk $VaR_p$ in finance, where $p$ is the probability of loss.}-- probabilities here need to be in the bottom half, i.e., $p<.5$. Note that we are operating under the mild case of known probability distributions, as it gets worse under parametric uncertainty.\footnote{Note the van der Wijk's law, see Cirillo \cite{cirillo2013your}: $\frac{I_1}{I_2}$ is related to what is called in finance the expected shortfall for $K_p$.}
 \begin{figure}
\includegraphics[width=1.1\linewidth]{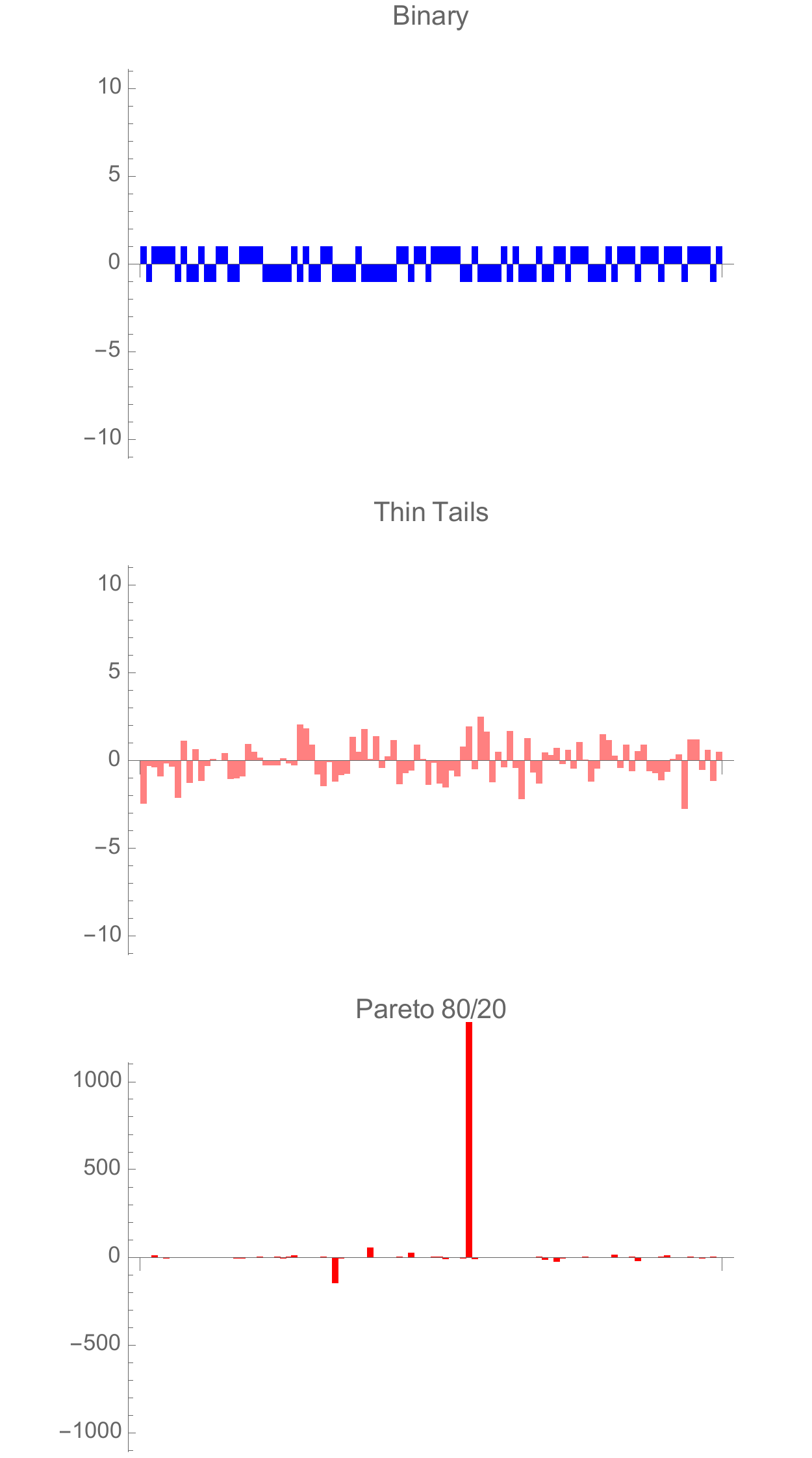}\label{threeplots}
\caption{Comparing the three payoff time series under two distributions --the binary has the same profile regardless of whether the distribution is thin or fat tailed. The first two subfigures are to scale, the third (representing the Pareto 80/20 with $\alpha= 1.16$ requires multiplying the scale by two orders of magnitude.}
\label{threeplots}
\end{figure}

The most commonly known distribution among the public, the "Pareto 80/20" (based on Pareto discovering that 20 percent of the people in Italy owned 80 percent of the land), maps to a tail index $\alpha=1.16$, so the adjusted probability is $>7$ times the naive one. 

\begin{table}
\caption{Gaussian pseudo-overestimation}	\label{Gaussianpseudooverestimation}
\begin{tabular}{l|lllll}
 p & $K_p$ & $\int _{K_p}^{\infty }x f(x)dx$ & $K_p\int _{K_p}^{\infty }f(x)dx$ & $p^* $&
 $\frac{p^*}{p} $\\
    \hline
$ \frac{1}{10}$ & $1.28$ & $1.75\times 10^{-1}$ & $1.28\times 10^{-1}$ & $1.36\times 10^{-1}$ & $1.36$
   \\
 $\frac{1}{100}$ & $2.32$ & $2.66\times 10^{-2}$ & $2.32\times 10^{-2}$ & $1.14\times 10^{-2}$ & 1.14
   \\
$ \frac{1}{1000} $& 3.09 & $3.36\times 10^{-3}$ & $3.09\times 10^{-3}$ & $1.08\times 10^{-3} $&
   1.08 \\
$ \frac{1}{10000}$ & 3.71 & $3.95\times 10^{-4}$ & $3.71\times 10^{-4}$ & $1.06\times 10^{-4}$ &
   1.06 \\
\end{tabular}
\end{table}

\begin{table}
\caption{Paretan pseudo-overestimation}	\label{Paretanpseudooverestimation}
\begin{tabular}{l|lllll}
 p & $K_p$ & $\int _{K_p}^{\infty }x f(x)dx$ & $K_p\int _{K_p}^{\infty }f(x)dx$ & $p^* $&
 $\frac{p^*}{p} $\\
 \hline
 $\frac{1}{10}$ & 8.1 & 8.92 & 0.811 & 1.1 (sic) & 11. \\
 $\frac{1}{100}$ & 65.7 & 7.23 & 0.65 & 0.11 & 11. \\
 $\frac{1}{1000}$ & 533 & 5.87 & 0.53 & 0.011 & 11. \\
 $\frac{1}{10000} $& 4328 & 4.76 & 0.43 & 0.0011 & 11. \\	
\end{tabular}
\end{table}	

\subsubsection{Example of probability and expected payoff reacting in opposite direction under increase in uncertainty}
An example showing how, under a skewed distribution, the binary and the expectation reacting in opposite directions is as follows. Consider the risk-neutral lognormal distribution $\mathcal{L}(X_0-\frac{1}{\sigma^2},\sigma)$ with pdf $f_L(.)$, mean $X_0$ and variance $\left(e^{\sigma ^2}-1\right) X_0^2$. We can increase its uncertainty with the parameter $\sigma$. We have the expectation of a contract above $X_0$, $\mathbbm{E}_{>X_0}$:
 $$\mathbbm{E}_{>X_0}=\int_{X_0}^\infty x f_L(x) \;\mathrm{d}x= \frac{1}{2} X_0 \left(1+\text{erf}\left(\frac{\sigma }{2 \sqrt{2}}\right)\right)$$ 
and the probability of exceeding $X_0$, 
$$\mathbbm{P}(X>X_0)=\frac{1}{2} \left(1-\text{erf}\left(\frac{\sigma }{2 \sqrt{2}}\right)\right),$$ 
where erf is the  error function. As $\sigma$ rises $\text{erf}\left(\frac{\sigma }{2 \sqrt{2}}\right) \to 1$, with $\mathbbm{E}_{>X_0} \to X_0$ and $\mathbbm{P}(X>X_0) \to 0$. This example is well known by option traders (see \textit{Dynamic Hedging} \cite{taleb1997dynamic}) as the binary option struck at $X_0$ goes to 0 while the standard call of the same strike rises considerably to reach the level of the asset --regardless of strike. This is typically the case with venture capital: the riskier the project, the less likely it is to succeed but the more rewarding in case of success. So, the expectation can go to $+\infty$ while to probability of success goes to $0$.

\subsection{Distributional Uncertainty}
We can gauge the effect of distributional uncertainty by examining the second order effect with respect to a parameter, which shows if errors induce either biases or an acceleration of divergence. A stable model needs to be necessarily linear to errors, see \cite{taleb2013mathematical}. \footnote{Furthermore, distributional uncertainty is in by itself a generator of fat tails: heteroskedasticity or variability of the variance produces higher kurtosis in the resulting distribution \cite{taleb1997dynamic},\cite{gatheral2006volatility}, \cite{DemDerKam99}.}

\begin{remark}[Distributional uncertainty]
	Owing to Jensen's inequality, the discrepancy $\left(I_1-I_2\right)$ increases under parameter uncertainty, expressed in higher kurtosis, via stochasticity of $\sigma$ the scale of the thin-tailed distribution, or that of $\alpha$ the tail index of the Paretan one.
\end{remark}

\begin{proof}
First, the Gaussian world. We consider the effect of $I_1-I_2= \int_K^\infty x f^{(g)}(x)-\int_K^\infty f^{(g)}(x) $ under stochastic volatility, i.e. the parameter from increase of volatility. Let $\sigma$ be the scale of the Gaussian, with $K$ constant:
\begin{dmath}
\frac{\partial ^2 (\int_K^\infty x f^{(g)}(x) dx)}{\partial \sigma ^2}-\frac{\partial ^2 (\int_K^\infty  f^{(g)}(x) dx)}{\partial \sigma ^2} =\frac{e^{-\frac{K^2}{2 \sigma ^2}} \left((K-1) K^3-(K-2) K \sigma ^2\right)}{\sqrt{2 \pi } \sigma ^5},
\end{dmath}
which is positive for all values of $K >0$ (given that $K^4-K^3-K^2+2 K>0$ for $K$ positive).

Second, consider the sensitivity of the ratio $\frac{I_1}{I_2}$ to parameter uncertainty for $\alpha$ in the Paretan case (for which  we can get a streamlined expression compared to the difference). For $\alpha>1$ (the condition for a finite mean):
\begin{dmath}
\frac{\partial ^2 \left(\int_K^\infty x f^{(p)}(x) dx/\int_K^\infty  f^{(p)}(x) dx\right)}{\partial \alpha ^2}=
\frac{2 K}{(\alpha -1)^3}
\end{dmath}
which is positive and increases markedly at lower values of $\alpha$, meaning the fatter the tails, the worse the uncertainty about the expected payoff and the larger the difference between $I_1$ and $I_2$.

\end{proof}

\section{Calibration and Miscalibration}\label{miscalibr}
The psychology literature also examines the "calibration" of probabilistic assessment --an evaluation of how close someone providing odds of events turns out to be on average (under some operation of the law of large number deemed satisfactory) \cite{lichtenstein1977calibration}, \cite{keren1991calibration}, see Figures \ref{calibration} , \ref{miscalibration}, and \ref{baron}. The methods, for the reasons we have shown here, are highly flawed except in narrow circumstances of purely binary payoffs (such as those entailing a "win/lose" outcome) --and generalizing from these payoffs is either not possible or produces misleading results. Accordingly, Fig. \ref{baron} makes little sense empirically.

At the core, calibration metrics such as the Brier score are always thin-tailed, when the variable under measurement is fat-tailed, which worsens the tractability.

To use again the saying "You do not eat forecasts", most businesses have severely skewed payoffs, so being calibrated in probability is meaningless.

 \begin{remark}[Distributional differences]
Binary forecasts and calibration metrics via the Brier score belong to the thin-tailed class.
 \end{remark}

\section{Scoring Metrics}\label{metrics}

\begin{table}\caption{Scoring Metrics for Performance Evaluation}\label{summarymetrics}

\begin{tabular}{|l|p{2. cm}|p{4.5cm}|}
\hline

\rowcolor{cyan!20}& & \\
\rowcolor{cyan!20}\textbf{Metric} & \textbf{Name} & \textbf{Fitness to reality} \\ 
\rowcolor{cyan!20}& & \\

\hline
$P^{(r)}(T)$& Cumulative P/L  & Adapted to real world distributions, particularly under a survival filter\\
\hline
$P^{(p)}(n)$&Tally of bets & Misrepresents the performance under fat tails, works only for binary bets and/or thin tailed domains.\\
\hline
$\lambda(n)$&Brier score & Misrepresents performance precision under fat tails, ignores higher moments.\\
\hline
$\lambda^{(M4)_{1,2}}_n $ &M4 first moment score (point estimate) & Represents precision, not exactly real world performance, but maps to real distribution of underlying variables.\\
\hline
$\lambda^{(M4)_3}_n $ &M4 dispersion Score & Represents precision in the assessment of confidence intervals.\\
\hline
	$\lambda^{(M5)}_n $ &Proposed M5 score  & Represents both precision and survival conditions by predicting extrema of time series.\\
	\hline
	$g(.)$ & Machine learning nonlinear payoff function (not a metric)& Expresses exposures without verbalism and reflects true economic or other P/L. Resembles financial derivatives term sheets.\\
\hline
\end{tabular}
	
\end{table}

This section, summarized in Table \ref{summarymetrics}, compares the probability distributions of the various metrics used to measure performance, either by explicit formulation or by linking it to a certain probability class. Clearly one may be mismeasuring performance if the random variable is in the wrong probability class. Different underlying distributions will require a different number of sample sizes owing to the differences in the way the law of numbers operates across distributions. A series of binary forecasts will converge very rapidly to a thin-tailed Gaussian even if the underlying distribution is fat-tailed, but an economic P/L tracking performance for someone with a real exposure will require a considerably larger sample size if, say, the underlying is Pareto distributed \cite{taleb2018much}.

We start by precise expressions for the four possible ones:
\begin{enumerate}
	\item Real world performance under conditions of survival, or, in other words, P/L or a quantitative cumulative score.
	\item A tally of bets, the naive sum of how often a person's binary prediction is correct
	\item De Finetti's Brier score $\lambda^{}(B)_{n}$
	\item The M4 point estimate score $\lambda^{M4}_{n}$ for $n$ observations used in the M4 competition, and its sequel M5.\footnote{https://www.m4.unic.ac.cy/wp-content/uploads/2018/03/M4-Competitors-Guide.pdf} \footnote{The M4 competition also entails a range estimate, evaluated according to techniques for the Mean Scaled Interval Score (MSIS), which reflects uncertainty about the variable and maps to the distribution of the mean absolute deviation.}
\end{enumerate}

\subsubsection{P/L in Payoff Space (under survival condition)}
The "P/L" is short for the natural profit and loss index, that is, a cumulative account of performance. Let $X_i$ be realizations of an unidimensional generic random variable $X$ with support in $\mathbb{R}$ and $t=1, 2, \ldots n$. 
Real world payoffs $P_r(.)$ are expressed in a simplified way as 

\begin{equation}
P_r(n)=P(0)+\sum_{k\leq N} g(x_t),	
\end{equation}
where  $g_t:\mathbb{R} \rightarrow \mathbb{R}$ is a measurable function representing the payoff;  $g$ may be path dependent (to accommodate a survival condition), that is, it is a function of the preceding period $\tau<t$  or on the cumulative sum $\sum_{\tau\leq t} g(x_\tau)$ to  introduce an absorbing barrier, say, bankruptcy avoidance, in which case we write:

\begin{mdframed}
\smallskip
\begin{equation}
	P^{(r)}(T)=P^{(r)}(0)+\sum_{t\leq n} \mathbbm{1}_{\left(\sum_{\tau < t}g(x_\tau)>b\right)}\; g(x_t),\label{ergodicity}
\end{equation}
	
\end{mdframed}
where $b$ is any arbitrary number in $\mathbb{R}$ that we call the survival mark and $\mathbbm{1}_{(.)}$ an indicator function $\in \{0,1\}$.

The last condition from the indicator function in Eq. \ref{ergodicity} is meant to handle ergodicity or lack of it \cite{taleb2016incerto}. 
\begin{commentary}
	P/L tautologically corresponds to the real world distribution, with an absorbing barrier at the survival condition.

\end{commentary}

\subsubsection{Frequency Space, }
The standard psychology literature has two approaches.
\smallskip
\subsubsection*{A--When tallying forecasts as a counter}
\begin{mdframed}
\smallskip
\begin{equation}
	P^{(p)}(n)=\frac{1}{n}\sum_{i\leq n} \mathbbm{1}_{X_t \in \chi},
\end{equation}
\end{mdframed}
where $\mathbbm{1}_{X_t \in \chi} \in \{0,1\}$ is an indicator that the random variable $x$ $\in \chi_t$ in in the "forecast range", and $T$ the total number of such forecasting events. 

\subsubsection*{B--When dealing with a score (calibration method)}
  in the absence of a visible net performance,  researchers produce some more advanced metric or score to measure calibration.  We select below the "gold standard", De Finetti's Brier score(DeFinetti, \cite{de2008philosophical}). It is favored since it doesn't allow arbitrage and requires perfect probabilistic calibration: someone betting than an event has a probability $1$ of occurring will get a perfect score only if the event occurs all the time. Let $f_t \in [0,1]$ be the probability announced by the forecaster for event $t$, 
\begin{mdframed}
\smallskip	
 \begin{equation}
 	\lambda^{(B)}_n = \frac{1}{n}\sum\limits _{t \leq n}(f_t-\mathbbm{1}_{X_t \in \chi})^2.
 \end{equation}
 \end{mdframed}
which needs to be minimized for a perfect probability assessor.

\subsubsection{Applications: M4 and M5 Competitions}
The M series (Makridakis \cite{makridakis2018m4}) evaluate forecasters using various methods to predict a point estimate (along with a range of possible values). The last competition in 2018, M4, largely relied on a series of scores,  $\lambda^{M4_j}$, which works well in situations where one has to forecast the first moment of the distribution and the dispersion around it. 

\begin{definition}[The M4 first moment forecasting scores]\label{M4def}
	The M4 competition precision score (Makridakis et al. \cite{makridakis2018m4}) judges competitors on the following metrics indexed by $j=1,2$
\begin{equation}
	\lambda^{(M4)_j}_{n}=\frac{1}{n}\sum_i^n \frac{\left|X_{f_i}-X_{r_i} \right|}{s_j}
\end{equation}
where $s_1=\frac{1}{2}\left(|X_{f_i}|+|X_{r_i}|\right)$ and $s_2$ is (usually) the raw mean absolute deviation for the observations available up to period $i$ (i.e., the mean absolute error from either "naive" forecasting or that from in sample tests), $X_{f_i}$ is the forecast for variable $i$ as a point estimate, $X_{r_i}$ is the realized variable and $n$ the number of experiments under scrutiny.
\end{definition}
In other word, it is an application of the Mean Absolute Scaled Error (MASE) and the symmetric Mean Absolute Percentage Error (sMAPE) \cite{hyndman2006another}.

The suggest M5 score (expected for 2020) adds the forecasts of extrema of the variables under considerations and repeats the same tests as the one for raw variables in Definition \ref{M4def}.

\subsection{Deriving Distributions}
\subsubsection{Distribution of $P^{(p)}(n)$}
\begin{remark}
	The average tally of the binary forecast (expressed as "higher" or "lower" than a threshold), $P^{(p)}(n)$ is, (using $p$ as a shortcut), asymptotically normal with mean $p$ and standard deviation $\sqrt{\frac{1}{n} (p - p^2)}$ regardless of the distribution class of the random variable $X$.
\end{remark}
The results are quite standard, but see appendix for the re-derivations.
 \smallskip
 
 \subsubsection{Distribution of the Brier Score $\lambda_n$}
 \begin{theorem}
Regardless of the distribution of the random variable $X$, without even assuming independence of $(f_1-\mathbbm{1}_A{_1}), \ldots, (f_n-\mathbbm{1}_A{_n})$, for $n<+\infty$, the score $\lambda_n$ has all moments of order $q$, $ \mathbbm{E}(\lambda_n^q)<+\infty$.
\end{theorem}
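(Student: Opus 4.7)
The plan is to observe that the Brier score is a bounded random variable, so its moments of any order are trivially finite; this will in fact show that the score is about as far from fat-tailed as a random variable can be, which is the real point the author wants to extract from the statement.

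First I would note that by definition of a probability forecast, $f_t \in [0,1]$, and by definition of an indicator, $\mathbbm{1}_{A_t} \in \{0,1\}$. Hence for every $t$, $f_t - \mathbbm{1}_{A_t} \in [-1,1]$, and therefore $(f_t - \mathbbm{1}_{A_t})^2 \in [0,1]$. This is a deterministic bound on each summand, requiring no hypothesis on the joint distribution of $X_1,\dots,X_n$ and no independence.

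Next I would sum and normalize: since $\lambda_n = \frac{1}{n}\sum_{t\leq n}(f_t-\mathbbm{1}_{A_t})^2$ is an average of $n$ quantities each lying in $[0,1]$, we have $0 \leq \lambda_n \leq 1$ almost surely. Raising to the power $q>0$ preserves this: $\lambda_n^q \in [0,1]$ almost surely, so by monotonicity of expectation $\mathbbm{E}(\lambda_n^q) \leq 1 < \infty$, for every finite $n$ and every $q>0$. No integrability assumption on $X$ is used at any point; the bound is purely deterministic.

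There is essentially no obstacle here — the statement is a boundedness observation, and that is precisely why it matters for the paper's thesis. The follow-up remark I would make, to connect this to the surrounding discussion, is that boundedness of $\lambda_n$ places it in the thinnest possible tail class: all cumulants exist, the central limit theorem applies with Gaussian limits at the standard $\sqrt{n}$ rate, and concentration inequalities of Hoeffding type give sub-Gaussian deviation bounds. Consequently, however fat-tailed the underlying variable $X$ may be, the Brier score cannot transmit that fat-tailedness, which is exactly the mismatch between calibration metrics and real-world payoffs that the paper is warning about.
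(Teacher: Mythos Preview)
Your proof is correct and follows exactly the paper's own argument: the author's entire proof is the single line ``For all $i$, $(f_i - \mathbbm{1}_{A_i})^2 \leq 1$,'' from which the boundedness of $\lambda_n$ and hence the finiteness of all moments is immediate. Your additional commentary connecting boundedness to thin-tailedness and the Brier score's inability to transmit the fat tails of $X$ is precisely the point the paper is driving at in the surrounding remarks.
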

\begin{proof}
For all $i$, $(f_i -\mathbbm{1}_A{_i})^2 \leq 1$ .
\end{proof}

We can get actually closer to a full distribution of the score across independent betting policies. Assume binary predictions $f_i$ are independent and follow a beta distribution $\mathcal{B}(a,b)$ (which approximates or includes all unimodal distributions in $[0,1]$ (plus a Bernoulli via two Dirac functions), and let $p$ be the rate of success $p=\mathbbm{E}\left( \mathbbm{1}_A{_i}\right) $, the characteristic function of $\lambda_n$ for $n$ evaluations of the Brier score is
\begin{dmath}	
	\varphi_n(t)=\pi ^{n/2} \left(2^{-a-b+1} \Gamma (a+b) \\
	\left(p \, _2\tilde{F}_2\left(\frac{b+1}{2},\frac{b}{2};\frac{a+b}{2},\frac{1}{2} (a+b+1);\frac{i t}{n}\right)\\
	-(p-1) \, _2\tilde{F}_2\left(\frac{a+1}{2},\frac{a}{2};\frac{a+b}{2},\frac{1}{2} (a+b+1);\frac{i t}{n}\right)\right)\right)\label{brierdist}
\end{dmath}

Here $_2\tilde{F}_2$ is the generalized hypergeometric function regularized
$_2\tilde{F}_2(.,.;.,.;.)=$
$\frac{_2F_2(a;b;z)}{\left(\Gamma (b_1)\ldots \Gamma (b_q)\right)}$ and
$_pF_q(a;b;z)$ has series expansion $\sum _{k=0}^{\infty } \frac{ (a_1)_k \ldots  (a_p)_k}{ (b_1)_k \ldots  (b_p)_k} z^k/k!$, were $(a)_{(.)}$ is the Pockhammer symbol.



Hence we can prove the following: under the conditions of independence of the summands stated above,
\begin{mdframed}
\begin{equation}
	\lambda_n	\xrightarrow{D} \mathcal{N}\left(\mu,\sigma_n\right)
\end{equation}
	\end{mdframed}
where $\mathcal{N}$ denotes the Gaussian distribution with for first argument the mean and for second argument the standard deviation.

The proof and parametrization of $\mu$ and $\sigma_n$ are in the appendix.

\subsubsection{Distribution of the economic P/L or quantitative measure $P_r$}
\begin{remark}
Conditional on survival to time $T$, the distribution of the quantitative measure $P^{(r)}(T)$ will follow the distribution of the underlying variable $g(x)$.	
\end{remark}

The discussion is straightforward if there is no absorbing barrier (i.e., no survival condition).
\subsubsection{Distribution of the M4 score}
The distribution of an absolute deviation is in the same probability class as the variable itself. Thee Brier score is in the norm L2 and is based on the second moment (which always exists) as De Finetti has shown that it is more efficient to just a probability in square deviations. However for nonbinaries, it is vastly more efficient under fat tails to rely on absolute deviations, even when the second moment exists \cite{taleb2019statistical}.

 \section{Non-Verbalistic Payoff Functions and The Good News from Machine Learning}\label{ml}
 Earlier examples focused on simple payoff functions, with some cases where the conflation $I_1$ and $I_2$ can be benign (under the condition of being in a thin tailed environment). 
 
\subsubsection*{Inseparability of probability under nonlinear payoff function}

Now when we introduce a payoff function $g(.)$ that is nonlinear, that is that the economic or other quantifiable response to the random variable $X$ varies with the levels of $X$, the discrepancy becomes greater and the conflation worse.
\begin{commentary} [Probability as an integration kernel]
 Probability is just a kernel inside an integral or a summation, not a real thing on its own. The  economic world is about quantitative payoffs.
\end{commentary} 

\begin{remark}[Inseparability of probability]
Let $F: \mathcal{A} \rightarrow [0,1]$ be a probability distribution (with derivative $f$) and $g: \mathbbm{R} \rightarrow \mathbbm{R} $ a measurable function, the "payoff"". Clearly, for $\mathcal{A}'$ a subset of $\mathcal{A}$:
\begin{dmath*}
	\int_{\mathcal{A}'} g(x) \mathrm{d}F(x)=	\int_{\mathcal{A}'} f(x) g(x) \mathrm{d}x	 \neq \int_{\mathcal{A}'} f(x) \mathrm{d}x\;	 g\left(\int_{\mathcal{A}'} \mathrm{d}x\right)
	\end{dmath*}

In discrete terms, with $\pi(.)$ a probability mass function:
\begin{dmath}
	\sum_{x\in \mathcal{A}'} \pi(x) g(x) \neq \sum_{x\in \mathcal{A}'} \pi(x) \; g\left(\frac{1}{n} \sum_{x\in \mathcal{A}' }x \right) \\
	=\text{probability of event } \times \text{payoff of average event }
\end{dmath}
\end{remark}
\begin{proof}
Immediate by Jensen's inequality.	
\end{proof}

In other words, the probability of an event is an expected payoff only when, as we saw earlier,  $g(x)$ is a Heaviside theta function.

Next we focus on functions tractable mathematically or legally but not reliable verbalistically via "beliefs" or "predictions".
\subsubsection{Misunderstanding $g$}
Figure \ref{morgantrade} showing the mishedging story of Morgan Stanley is illustrative of verbalistic notions such as "collapse" mis-expressed in nonlinear exposures. In 2007 the Wall Street firm Morgan Stanley decided to "hedge" against a real estate "collapse", before the market in real estate started declining. The problem is that they didn't realize that "collapse" could take many values, some worse than they expected, and set themselves up 
to benefit if there were a mild decline, but lose much if there is a larger one. They ended up right in predicting the crisis, but lose $\$10$ billion from the "hedge".

Figure \ref{butterfly} shows a more complicated payoff, dubbed a "butterfly".

\subsubsection{$g$ and machine learning}

	We note that $g$ maps to various machine learning functions that produce exhaustive nonlinearities via the universal universal approximation theorem (Cybenko \cite{cybenko1989approximation}), or the generalized option payoff decompositions (see \textit{Dynamic Hedging}  \cite{taleb1997dynamic}).\footnote{We note the machine learning use of cross-entropy to gauge the difference between the distribution of a variable and the forecast, a technique that in effect captures the nonlinearities of $g(.)$. In effect on Principal Component Maps, distances using cross entropy are more effectively shown as the L^2 norm boosts extremes. For a longer discussion, \cite{taleb2019statistical}.}



Consider the function $\rho: (-\infty, \infty) \to [K,\infty)$, with $K$, the r.v. $X  \in \mathbb{R}$:
\begin{equation}
\rho_{K,p}(x)=k+\frac{\log \left(e^{p (x-K)}+1\right)}{p}	\label{rho}
\end{equation}
We can express all nonlinear payoff functions $g$ as, with the weighting $\omega_i \in \mathbbm{R}$: 
\begin{equation}
	g(x)=\sum_i \omega_i\; \rho_{{K_i},p}(x) 
\end{equation}
by some similarity, $\rho_{K,p}(x)$ maps to the value a call price with strike $K$ and time $t$ to expiration normalized to $1$, all rates set at $0$, with sole other parameter $\sigma$ the standard deviation of the underlying.

We note that the expectation of $g(.)$ is the sum of expectation of the ReLu functions:
\begin{equation}
	\mathbbm{E}\left(g(x)\right)=\sum_i \omega_i\; \mathbbm{E}\left(\rho_{{K_i},p}(x) \right)
\end{equation}
 The variance and other higher order statistical measurements are harder to obtain in closed or simple form.
 
 \begin{figure}
\includegraphics[width=\linewidth]{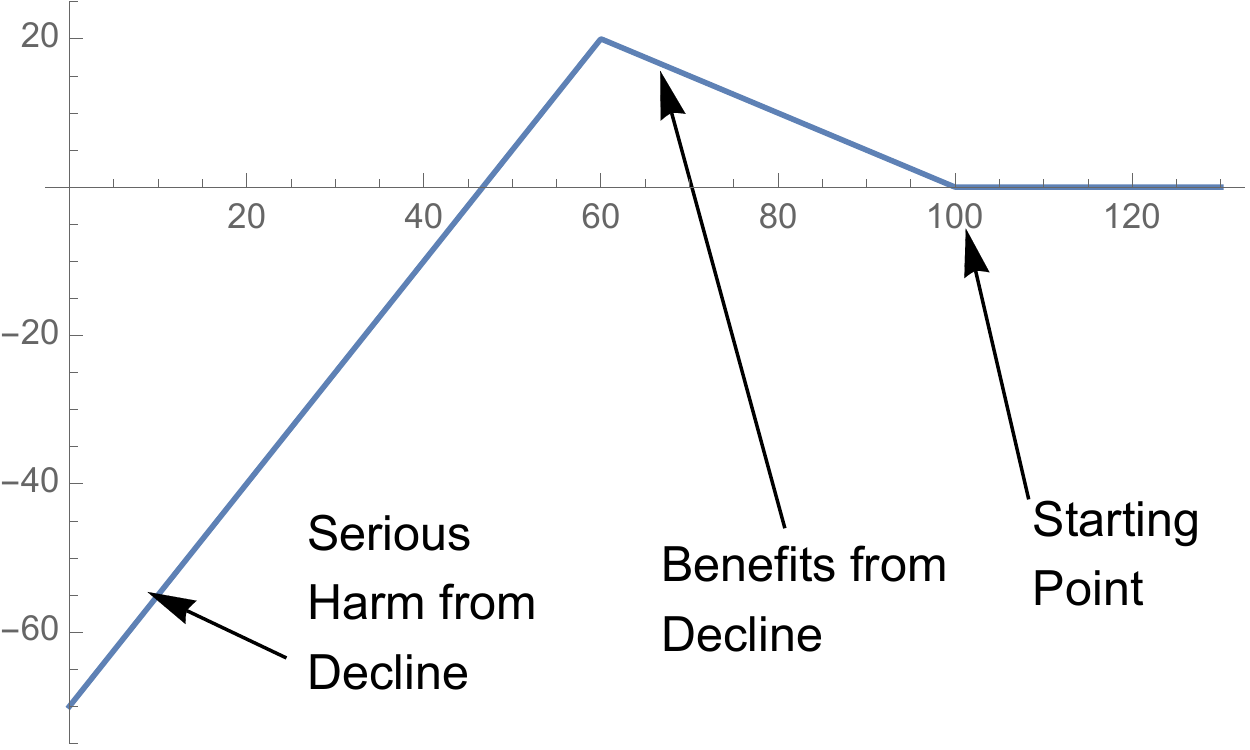}
\caption{The Morgan Stanley Story: an example of an elementary nonlinear payoff that cannot be described verbalistically.  The $x$ axis represents the variable, the vertical one the payoff.
This exposure is called in derivatives traders jargon a "Christmas Tree", achieved by purchasing a put with strike $K$ and selling a put with lower strike $K-\Delta_1$ and another with even lower strike $K-\Delta_2$, with $\Delta_2\geq \Delta_1\geq 0$.
}
\label{morgantrade}
\end{figure}

\begin{figure}
\includegraphics[width=\linewidth]{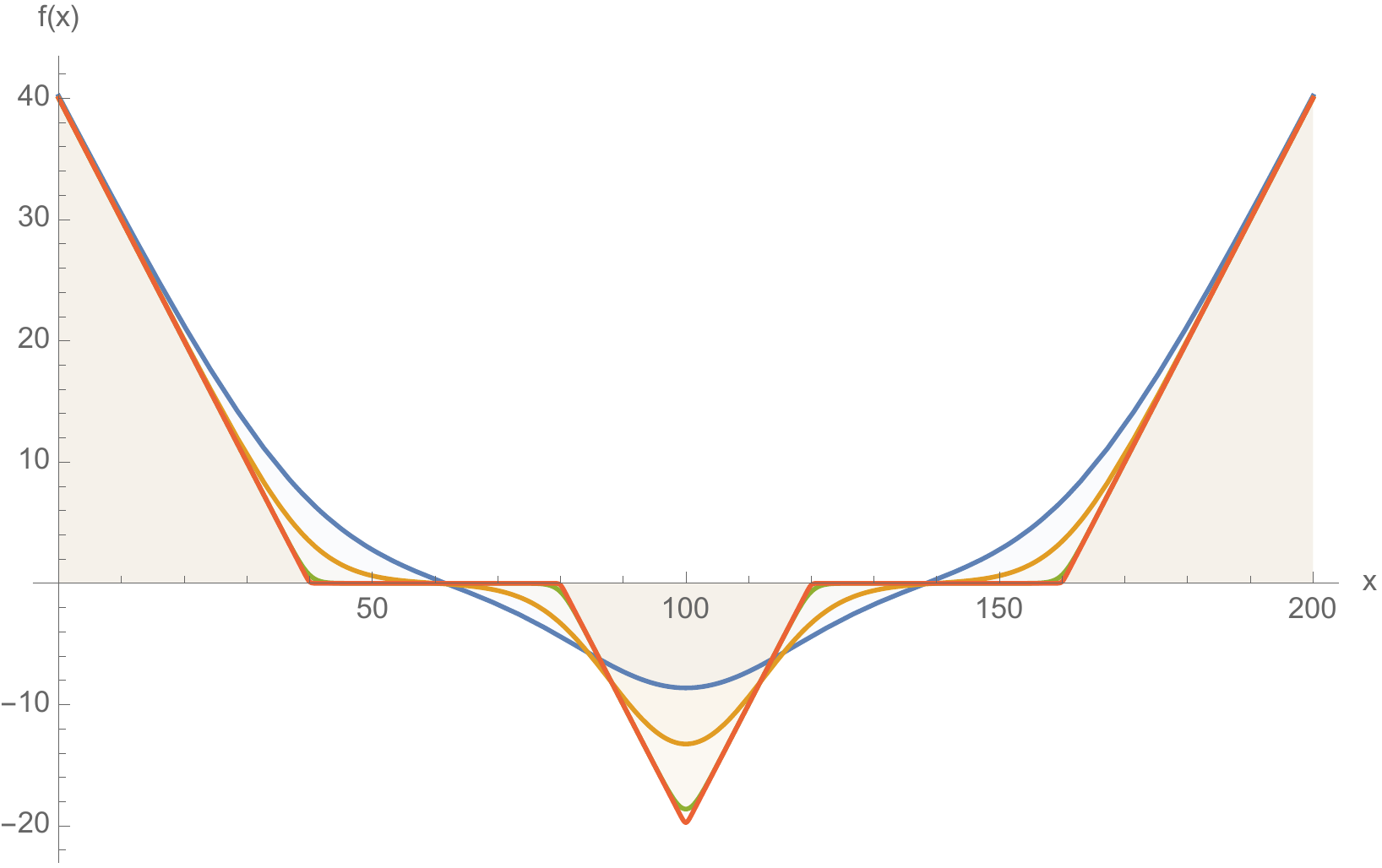}
\caption{A butterfly (built via a sum of options or ReLu $\rho_{K_i}$), with open tails on both sides and flipping first and second derivatives. Again, the $x$ axis represents the variable, the vertical one the payoff.
This example is particularly potent as it has no verbalistic correspondence but can be understood by option traders and machine learning.} \label{butterfly}
\end{figure}

\begin{commentary}
Risk management is about changing the payoff function $g(.)$	 rather than making "good forecasts".
\end{commentary}

We note that $g(.)$ is not a metric but a target to which one can apply various metrics. 

\begin{example}
	XIV was a volatility-linked exchange traded note that went bust, in February 2018, in the wake of a sudden jump in volatility, while having been able to "predict" such variable correctly. Simply, their $g(.)$ function was nonlinear not properly adapted to naive forecasting of the first moment of the random variable. \footnote{see \textit{Born to Die: Inside XIV, the Busted Volatility ETN}, \textit{Wall Street Journal}, Feb. 6, 2018).} The following simplified example can show us their error and the mismatch between correct forecasting and performance as well as  what happens under fat tails. 
	
	Consider the following "short volatility" payoff function,  $g(x)=1-x^2$ evaluated daily, where $x$ is a daily variation of an asset price, meaning if $x$ moves by up to $1$ unit (say, standard deviation), there is a profit, and losses beyond that. Such terms are found in a typical contract called "variance swap". Assume the note (or fund) are predicting that average daily (normalized) volatility will be "lower than 1". Now consider the following two types successions of deviations of $x$ for 7 days (expressed in normalized standard deviations). 

Succession 1 (thin tails): $\{1, 1, 1, 1, 1, 0, 0\}$. Mean variation= $0.71$. P/L: $g(x)=2$. Prediction is correct.

Succession 2 (fat tails): $\{0,0,0,0, 0,0,5 \}$. Mean variation= $0.71$ (same). P/L: $g(x)=-18$ (bust, really bust, but prediction is correct).

So in both cases a forecast of $<1$ is correct, but the lumping of the volatility --the fatness of tails-- made a huge difference.

This in a nutshell shows how, in the world of finance, "bad" forecasters can make great traders and decision makers, and vice versa.
\end{example}

\subsection*{Survival}
Decision making is sequential. Accordingly, miscalibration may be a good idea if it reduces the odds of being absorbed. See \cite{peters2014evaluating} and the appendix of \cite{taleb2016incerto}, which shows the difference between ensemble probability and time probability. The expectation of the sum of $n$ gamblers over a given day is different from that of a single gambler over $n$ days, owing to the conditioning.

In that sense, measuring the static performance of an agent who will eventually go bust (with probability one) is meaningless.\footnote {One suggestion by the author is for the M5 competition to correct for that by making "predictors" predict the minimum (or maximum) in a time series. This insures the predictor meets the constraint of staying above water.}

 \section{Conclusion:}
Finally,  that in the real world, it is the net performance (economic or other) that counts, and making "calibration" mistakes where it doesn't matter or can be helpful should be encouraged, not penalized. The bias variance argument is well known in machine learning \cite{hastie2009elements} as means to increase performance, in discussions of rationality (see \textit{Skin in the Game} \cite{taleb2016incerto}) as a necessary mechanism for survival, and  a very helpful psychological adaptation (Brighton and Gigerenzer \cite{brighton2012homo} show a potent argument that if it is a bias, it is a pretty useful one.) If a mistake doesn't cost you anything --or helps you survive or improve your outcomes-- it is clearly not a mistake. And if it costs you something, and has been present in society for a long time, consider that there may be hidden evolutionary advantages to these types of mistakes --of the following sort: \textbf{mistaking a bear for a stone} is worse than \textbf{mistaking a stone for a bear}. 

We have shown that, in risk management (and while forecasting socio-economic and financial variables), one should never operate in probability space.

\appendix
\subsection{Distribution of Binary Tally $P^{(p)}(n)$}
We are dealing with an average of Bernoulli random variables, with well known results but worth redoing.  The characteristic function of a Bernoulli distribution with parameter $p$ is $\psi(t)=1 - p + E^(I t) p$. We are concerned with the $N$-summed cumulant generating function $\psi'(\omega)=\log \psi(\frac{\omega}{N})^N$.
We have $\kappa(p)$ the cumulant of order $p$:
$$\kappa(p)=-i^p \frac{\partial ^p\psi'}{\partial t^p}\bigg|_{t\to 0}$$
So: $\kappa(1)=p,$ 
$\kappa(2)=\frac{(1-p) p}{N},$ 
$\kappa(3)=\frac{(p-1) p (2 p-1)}{N^2},$
$\kappa(4)=\frac{(1-p) p (6 (p-1) p+1)}{N^3}$, which proves that $P^{(p)}(N)$ converges by the law of large numbers at speed $\sqrt{N}$, and by the central limit theorem arrives to the Gaussian at a rate of $\frac{1}{N}$, (since from the cumulants above, its kurtosis = $3-\frac{6 (p-1) p+1}{n (p-1) p}$).

\subsection{Distribution of the Brier Score}
\subsubsection*{Base probability $f$}
First, we consider the distribution of $f$ the base probability. We use a beta distribution that covers both the conditional and unconditional case (it is a matter of parametrization of $a$ and $b$ in Eq. \ref{brierdist}).
\subsubsection*{Distribution of the probability}
 Let us refresh a standard result behind nonparametric discussions and tests, dating from Kolmogorov \cite{kolmogorov1933sulla} to show the rationale behind the claim that the probability distribution of probability (sic) is robust --in other words the distribution of the probability of $X$ doesn't depend on the distribution of $X$, 
 (\cite{dietrich2017uncertainty} \cite{keren1991calibration}). 

The probability integral transform is as follows. Let $X$ have a continuous distribution for which the cumulative distribution function (CDF) is $F_X$. Then --in the absence of additional information --the random variable $U$ defined as $U=F_X(X)$ is uniform between $0$ and $1$.
The proof is as follows: For $t \in [0,1]$,
\begin{multline}
	\mathbb{P}(Y \leq u) =\\
	 P(F_X(X) \leq u)= P(X \leq F_X^{-1}(u)) = F_X(F_X^{-1}(u)) = u	
	 \end{multline}
which is the cumulative distribution function of the uniform. 
This is the case regardless of the probability distribution of $X$.

%
%

Clearly we are dealing with 1) $f$ beta distributed (either as a special case the uniform distribution when purely random, as derived above, or a beta distribution when one has some accuracy, for which the uniform is a special case), 
and 2) $\mathbbm{1}_A{_t}$ a Bernoulli variable with probability $p$.

Let us consider the general case. Let $g_{a,b}$ be the PDF of the Beta distribution:

$$g_{a,b}(x)= \frac{x^{a-1} (1-x)^{b-1}}{B(a,b)}, \;  0<x<1.$$

 The results, a bit unwieldy but controllable:
$$\mu=\frac{\left(a^2 (-(p-1))-a p+a+b (b+1) p\right) \Gamma (a+b)}{\Gamma (a+b+2)},$$

\begin{dmath*}
	\sigma_n^2=-\frac{1}{n (a+b)^2 (a+b+1)^2}
\left(a^2 (p-1)+a (p-1)-b (b+1) p\right)^2+\frac{1}{(a+b+2) (a+b+3)}(a+b) (a+b+1) (p (a-b) (a+b+3) (a (a+3)+(b+1) (b+2))-a (a+1) (a+2) (a+3)).
\end{dmath*}

We can further verify that the Brier score has thinner tails than the Gaussian as its kurtosis is lower than 3.

\begin{proof}

We start with $y_j= (f-\mathbbm{1}_A{_j})$, the difference between a continuous Beta distributed random variable and a discrete Bernoulli one, both indexed by $j$. The characteristic function of $y_j$, $\Psi^{(y)}_f= \left(1+p \left(-1+e^{-i t}\right)\right) \, _1F_1(a;a+b;i t)$ where $_1F_1(.;.;.)$ is the hypergeometric distribution
$_1F_1(a;b;z)=\sum _{k=0}^{\infty } \frac{a_k \frac{z^k}{k}!}{b_k}$. 

From here we get the characteristic function for $y_j^2=(f_j-\mathbbm{1}_A{_j})^2$:
\begin{dmath}
\Psi^{(y^2)}(t)=\sqrt{\pi } 2^{-a-b+1} \Gamma (a+b) \left(p \, _2\tilde{F}_2\left(\frac{b+1}{2},\frac{b}{2};\frac{a+b}{2},\frac{1}{2} (a+b+1);i t\right)-(p-1) \, _2\tilde{F}_2\left(\frac{a+1}{2},\frac{a}{2};\frac{a+b}{2},\frac{1}{2} (a+b+1);i t\right)\right),
\end{dmath}
where $_2\tilde{F}_2$ is the generalized hypergeometric function regularized
$_2\tilde{F}_2(.,.;.,.;.)=$
$\frac{_2F_2(a;b;z)}{\left(\Gamma (b_1)\ldots \Gamma (b_q)\right)}$ and
$_pF_q(a;b;z)$ has series expansion $\sum _{k=0}^{\infty } \frac{ (a_1)_k \ldots  (a_p)_k}{ (b_1)_k \ldots  (b_p)_k} z^k/k!$, were $(a)_{(.)}$ is the Pockhammer symbol.

We can proceed to prove directly from there the convergence in distribution for the average $\frac{1}{n}\sum_i^n y_i^2$:
\begin{dmath}
	\underset{n\to \infty }{\text{lim}}\Psi_{y^2}(t/n)^n=\\
	\exp \left(-\frac{i t (p (a-b) (a+b+1)-a (a+1))}{(a+b) (a+b+1)}\right),
\end{dmath}
which is that of a degenerate Gaussian (Dirac) with location parameter $\frac{p (b-a)+\frac{a (a+1)}{a+b+1}}{a+b}$.


We can finally assess the speed of convergence, the rate at which higher moments map to those of a Gaussian distribution: consider the  behavior of the $4^{th}$ cumulant $\kappa_4= -i \frac{\partial ^4\log \Psi_.(.)}{\partial t^4}|_ {t\to 0}$:

1) in the maximum entropy case of $a=b=1$:
 $$\kappa_4|_{a=1,b=1}=-\frac{6}{7 n},$$
 regardless of $p$.
 
2) In the maximum variance case, using l'H\^opital:
$$\lim_{\substack{a\to 0 \\ b \to 0}} \kappa_4=-\frac{6 (p-1) p+1}{n (p-1) p}.$$	
 Se we have $ \frac{\kappa_4}{\kappa_2^2} \underset{n\to \infty}{ \rightarrow 0}$ at rate $n^{-1}$.
\end{proof}

Further, we can extract its probability density function of the Brier score for $N=1$: for $0<z<1$,
\begin{dmath}
	p(z)=\frac{\Gamma (a+b) \left((p-1) z^{a/2} \left(1-\sqrt{z}\right)^b-p \left(1-\sqrt{z}\right)^a z^{b/2}\right)}{2 \left(\sqrt{z}-1\right) z \Gamma (a) \Gamma (b)}.
\end{dmath}

\bibliographystyle{IEEEtran}
\bibliography{/Users/nntaleb/Dropbox/Central-bibliography}

\end{document}